\documentclass[12pt]{article}
%%% supplementary materials below
%%% HELPER CODE FOR DEALING WITH EXTERNAL REFERENCES
\usepackage{xcite}
\usepackage{xr}
%\usepackage{hyperref}
%\usepackage{xr-hyper}
% \makeatletter
% \newcommand*{\addFileDependency}[1]{
%   \typeout{(#1)}
%   \@addtofilelist{#1}
%   \IfFileExists{#1}{}{\typeout{No file #1.}}
% }
% \makeatother

%\usepackage{amsmath}
\usepackage{graphicx}
\usepackage{natbib} %comment out if you do not have the package
\usepackage{url} % not crucial - just used below for the URL 
%%%%%%
%%%%%%
%%% below is the package used
% ADD
\usepackage{natbib}
\usepackage{amsthm}
\usepackage{url} % not crucial - just used below for the URL 
\usepackage[utf8]{inputenc} % allow utf-8 input
\usepackage[T1]{fontenc}    % use 8-bit T1 fonts
\usepackage{hyperref}       % hyperlinks
\usepackage{url}            % simple URL typesetting
\usepackage{booktabs}       % professional-quality tables
\usepackage{amsfonts}       % blackboard math symbols
\usepackage{nicefrac}       % compact symbols for 1/2, etc.
\usepackage{microtype}      % microtypography
\usepackage{lipsum}
\usepackage{comment}
\usepackage{multirow}
\usepackage{xcolor,ulem}

%%%%
%\usepackage{amsthm}
\newtheorem{lemma}{Lemma}

\usepackage{authblk}
\usepackage{graphicx}
\usepackage{amsmath} % for math
\usepackage{natbib}
\usepackage{subfigure}
\usepackage{algorithm}
\usepackage{algorithmic}
%\usepackage[normalem]{ulem}
%%%%%%%%%%%%%%%%%%%%%%%%%%%%%%%%%%%%%%%%
%% Current page settting and preloaded packages %%%
%%%%%%%%%%%%%%%%%%%%%%%%%%%%%%%%%%%%%%%%

%%%%%%%%%%%%%%%%%%%%%
%% Simplified notations %%%
%%%%%%%%%%%%%%%%%%%%%

\newcommand{\bbR}{\mathbb{R}}

\newcommand{\sT}{\mathcal{T}}

% vector
\newcommand{\va}{\boldsymbol{a}}
\newcommand{\vb}{\boldsymbol{b}}

\newcommand{\vd}{\boldsymbol{d}}

\newcommand{\vf}{\boldsymbol{f}}

\newcommand{\vh}{\boldsymbol{h}}

\newcommand{\vt}{\boldsymbol{t}}
\newcommand{\vu}{\boldsymbol{u}}
\newcommand{\vv}{\boldsymbol{v}}

\newcommand{\vy}{\boldsymbol{y}}
\newcommand{\vz}{\boldsymbol{z}}

%vector group 2

\newcommand{\vtheta}{\boldsymbol{\theta}}
\newcommand{\vbeta}{\boldsymbol{\beta}}
\newcommand{\vpsi}{\boldsymbol{\psi}}

\newcommand{\vzero}{\mathbf{0}}
\newcommand{\vxi}{\boldsymbol{\xi}}

\newcommand{\vtau}{\boldsymbol{\tau}}
\newcommand{\vgamma}{\boldsymbol{\gamma}}

% matrix
\newcommand{\mA}{\mathbf{A}}
\newcommand{\mB}{\mathbf{B}}
\newcommand{\mC}{\mathbf{C}}
\newcommand{\mD}{\mathbf{D}}
\newcommand{\mE}{\mathbf{E}}
\newcommand{\mF}{\mathbf{F}}

\newcommand{\mH}{\mathbf{H}}
\newcommand{\mI}{\mathbf{I}}

\newcommand{\mK}{\mathbf{K}}

\newcommand{\mS}{\mathbf{S}}
\newcommand{\mT}{\mathbf{T}}

\newcommand{\mV}{\mathbf{V}}
\newcommand{\mW}{\mathbf{W}}

% Matrix Group 2
\newcommand{\mTheta}{\boldsymbol{\Theta}}
\newcommand{\mGamma}{\boldsymbol{\Gamma}}

\newcommand{\mSigma}{\boldsymbol{\Sigma}}

 %% vector element index

\newcommand{\Expect}{\mathbb{E}}
\newcommand{\intd}{\,\mathrm{d}}

\usepackage{color}

\DeclareMathOperator*{\argmin}{arg\,min}
\usepackage{cancel}
% add
\usepackage[toc,title,page]{appendix}

%%%%
%%%%
%%% above is the package used

\pdfminorversion=4
% NOTE: To produce blinded version, replace "0" with "1" below.
\newcommand{\blind}{0}

% DON'T change margins - should be 1 inch all around.
\addtolength{\oddsidemargin}{-.5in}%
\addtolength{\evensidemargin}{-.5in}%
\addtolength{\textwidth}{1in}%
\addtolength{\textheight}{1.3in}%
\addtolength{\topmargin}{-.8in}%

\begin{document}

	\def\spacingset#1{\renewcommand{\baselinestretch}%
		{#1}\small\normalsize} \spacingset{1}

	%%%%%%%%%%%%%%%%%%%%%%%%%%%%%%%%%%%%%%%%%%%%%%%%%%%%%%%%%%%%%%%%%%%%%%%%%%%%%%
	\begin{titlepage}
		\if0\blind
		{
			\title{\bf Functional PCA with Covariate Dependent Mean and Covariance Structure}
			\author{
				\normalsize{Fei Ding\\ 
				Institute of Statistics and Big Data, Renmin University of China, Beijing China, 100872\\
				and \\
				Shiyuan He \\
				Institute of Statistics and Big Data and Center for Applied Statistics, Renmin University of China, Beijing China, 100872\\
				and \\
				David E. Jones \\
				Department of Statistics, Texas A\&M University, College Station, Texas, 77843\\
				and \\
				Jianhua Z. Huang \\
				School of Data Science, The Chinese University of Hong Kong, Shenzhen China, 518172} }
			\date{}
			\maketitle
		} \fi
		
		\if1\blind
		{
			\title{\bf Functional PCA with Covariate Dependent Mean and Covariance Structure}
			\author{}
			\date{}
			\maketitle
		} \fi

		\begin{abstract}
			Incorporating covariates into functional principal component analysis (PCA) can substantially improve the representation efficiency of the principal components and predictive performance. However, many existing functional PCA methods do not make use of covariates, and those that do often have high computational cost or make overly simplistic assumptions that are violated in practice. In this article, we propose a new framework, called Covariate Dependent Functional Principal Component Analysis (CD-FPCA), in which both the mean and covariance structure depend on covariates. We propose a corresponding estimation algorithm, which makes use of spline basis representations and roughness penalties, and is substantially more computationally efficient than competing approaches of adequate estimation and prediction accuracy. A key aspect of our work is our novel approach for modeling the covariance function and ensuring that it is symmetric positive semi-definite. We demonstrate the advantages of our methodology through a simulation study and an astronomical data analysis.
		\end{abstract}

		\bigskip
		\noindent {\it Keywords:}  functional data; \and principal component analysis; \and covariate information; \and computational efficiency;  \and astrostatistics; 
		
	\end{titlepage}
	
	\clearpage
	
	\spacingset{1.5} % DON'T change the spacing!
	\section{INTRODUCTION} \label{sec:intro}
	
	Functional data analysis (FDA) is increasingly important in many scientific fields including astronomy, biology, and neuroscience. It is a powerful tool that can be used to jointly model collections of curves, time series, spatial structures, or other functional observations, and can address difficulties such as sparse and irregularly spaced measurements. The key to FDA is exploiting the widespread presence of underlying smoothness in real data to efficiently model similarities and differences between functional observations, e.g., the similarities and differences between times series capturing the changing brightness of stars of a given type.

	Due to the prevalence and variety of functional data, and the computational challenges which arise through their analysis, many FDA methods have been developed. 
	Among these approaches,  the most fundamental is functional principal component analysis (FPCA).
	A comprehensive introduction of FPCA is given in \cite{ramsay2005principal}. In its canonical form, FPCA represents each functional observation as a mean function plus a linear combination of functional principal components (FPCs) and noise. The key to this method is that it is often possible to well capture the data using % we can reduce the number of unknowns to be estimated by choosing to include 
	only a small number of FPCs, i.e., by imposing a low rank structure on the underlying covariance function. Furthermore, by constraining the inferred FPCs to be smooth, we can ensure that the corresponding covariance function is also smooth. Empirical studies have shown that this type of smoothness often improves estimation accuracy and predictive performance due to the usual  bias-variance trade-off encountered in statistical inference.

	One leading FPCA framework is based on the work of \cite{james2000principal}, which interprets FPCA as a mixed effects model for sparse and irregularly sampled  functional data.  
	Their estimation strategy  relies on a basis representation of the underlying eigenfuctions or FPCs, and imposes smoothness  by limiting the number of basis functions in the expansion.   Another popular FPCA approach is to compute the eigenvectors of the sample covariance matrix resulting from a locally smoothed approximation to the underlying covariance function, e.g., see \cite{rice1991estimating,yao2005functional}. A third strategy, proposed by \cite{cai2010nonparametric},  
	relies on a covariance function expansion induced by a user-specified tensor product reproducing kernel Hilbert space (RKHS). This approach has some appealing features, but suffers from the fact that the number of parameters to be optimized grows linearly with the sample size (i.e., the number of curves) and quadratically with the number of observations per curve.  The existing literature has also considered FPCA in the Bayesian setting \citep[e.g.][]{suarez2017bayesian,van2008variational,behseta2005hierarchical}. 
	
	%In the Bayesian setting,  \cite{suarez2017bayesian} proposed an FPCA method that uses a basis expansion of the FPCs \citep[in a similar manner to][]{james2000principal}, and induces a prior on them via a prior on the basis coefficients. \cite{van2008variational} suggested a variational inference approach to overcome the computational challenges that arise  in Bayesian FPCA. Further Bayesian FDA methods include, e.g., \citet{behseta2005hierarchical,rodriguez2009bayesian}.

	This paper proposes an FPCA method that incorporates covariates in a computationally efficient manner, which is a key extension of the above approaches. Indeed, covariates are often available in practice, and their inclusion can facilitate low-rank representations of functional data and  substantially improve predictive performance.
	There are a number of existing strategies for incorporating covariates, including early methods such as \citet{capra1997accelerated}, and more recent approaches which  primarily build on the work of \cite{yao2005functional}, e.g., \cite{jiang2010covariate,jiang2011functional,zhang2013time,zhang2016sparse}. However, these methods all rely on local smoothing, which is computationally  costly. Thus, in practice, they are slow and  cannot be used for analyzing datasets which are large or have more than a few covariates. The literature also proposes a number of alternative strategies which  achieve greater computational efficiency at the expense of simplifying assumptions. \cite{li2016supervised} assumes that only the FPC scores vary with the covariates linearly, and that the sampling points are the same for each functional observation (often called balanced sampling). 
	However, such assumptions are often violated in practice.

	The approach we propose here overcomes the computational challenges of including covariates, while simultaneously allowing both the mean function and the covariance function to depend on the covariates in a non-linear way, as well as permitting unbalanced sampling patterns. We call our method \textit{Covariate Dependent Functional PCA}, or CD-FPCA for short. Its computational efficiency results from a basis representation of the mean and covariance functions, similar to that used by \citet{james2000principal}, and a modeling approach which carefully avoids costly optimization problems. %and application of the determinant and Sherman-Morrison-Woodbury lemmas.
	Non-linear  dependence of the covariance function on the covariates is captured by allowing the underlying FPCs, as well as their scores, to depend flexibly on the covariates.
	Our CD-FPCA approach  contains as a special case the Supervised Sparse and Functional PCA (SupSFPC) method proposed by \citet{li2016supervised}, but avoids the simplifying assumptions mentioned earlier. %{\color{blue} [Add more explanation. Does the penalizaiton technically change the fact that SupSFPC is nested?]}{\color{blue}SupSFPC is nested in CD-FPCA without penalization.}% by introducing extensions to include more flexible covariate dependence and sampling patterns, as highlighted above. %In particular, our method allows both the mean function and the covariance function to depend on the covariates in a non-linear way, and permits irregularly spaced observations. 
	As a result, our method outperforms SupSFPC in all our numerical studies, as well as  \citet{james2000principal} (no covariate dependence) and \citet{jiang2010covariate}, the latter being  representative of the linear smoothing approaches. %{\color{blue} [How is \citet{chen2019functional} related? Should that paper be cited?]}{\color{blue} [\citet{chen2019functional} is one variate of \citet{li2016supervised} which also assumes a linear relationship between scores and covariates but allows the observations to be sparse or dense and balance or unbalance.]}

	A challenge for our approach, and other FPCA methods, is ensuring that the underlying covariance function and FPCs have the required properties, e.g., positive semi-definiteness and orthogonality, respectively. In the \cite{james2000principal} framework, a low-rank covariance function is assumed, meaning that only a small number of FPCs are needed to represent it via the Karhunen-Lo\`{e}ve expansion. This reduces the positive semi-definiteness constraint to ensuring that a small number of eigenvalues are positive. %Then all that remains is to guarantee that the estimated FPCs are orthogonal and 
	\citet{peng2009geometric}
	addressed the orthogonality constraint by representing the FPCs with a finite basis expansion and 
	using a restricted maximum likelihood (REML) method to fit the basis coefficients. %, with the restriction being that the coefficient matrix lay on the Stiefel manifold.  
	Our approach also relies on a finite basis expansion, but considers the covariance function directly rather than the FPCs. In particular, we use the fact that a low-rank covariance function can be represented by  $G(t,s|\vz)\approx\vb(t)^T \mSigma(\vz)\vb(s)$, where $\vz$ is a covariate vector and  $\vb(t)$ is a vector of the orthonormal basis functions. Then we propose a model for $\mSigma(\cdot)$ that maps  a Euclidean space vector to a symmetric positive semi-definite rank $r$ matrix, thereby automatically ensuring that $\mSigma$ and $G$ are positive semi-definite, and facilitating straightforward model fitting. Some recent papers \citep[e.g.][]{lin2017extrinsic} have explored   the related task of modeling manifold-valued data.
	% We choose the basis to be orthonormal because this makes it easy to obtain estimates of the underlying covariate dependent FPCs that satisfy the orthogonality constraint after fitting the model. 
	
	We additionally encourage smoothness of the estimated  functions via 
	the computationally efficient approach of \cite{wood2006low} and \cite{reiss2014varying} for penalizing functions with  multiple arguments.
	This is preferable to controlling smoothness by choosing a small fixed  number of basis functions, because the latter  approach is  discontinuous in nature.  Penalized splines have been used often  in the literature.  With a spline representation and roughness penalty, \cite{reiss2014varying} developed a series of techniques for estimating the mean response, while \cite{greven2017general} proposed a general framework for functional regression. In contrast to those works, our focus  is on modeling the conditional covariance function (i.e., conditional on the covariate value $\vz$) after subtracting the mean function. A key contribution of our work is constructing a proper mapping from  covariates to the conditional covariance function. Given a spline expansion, the problem is equivalent to finding a map from the covariate to a positive semi-definite matrix. We also develop a roughness penalty for this mapping.
	
	%Roughness penalties are a well-established tool in FDA, e.g., \cite{pezzulli1993some,silverman1996smoothed,cai2011optimal}. % which introduced some of the first methods and theoretical results for penalization in FPCA.  %developed some of the first  methods and theoretical results for roughness penalties.  and \cite{cai2011optimal} established a minimax bounds on the convergence rate for the estimation of smooth mean functions. {\color{blue}[Is it appropiate to cite \cite{cai2011optimal} here?]}% and gave situations under which smoothness is important.
	
	%Among them multidimensional FPCA have  attracted interest. Some studies have characterized the relationship between multiple or paired longitudinal curves, e.g., \cite{zhou2008joint} provided a framework to jointly model paired longitudinally observed variables using PCA scores. \cite{kayano2009functional} explored  multidimensional functional data estimation strategies based on Gaussian basis functions. Multilevel. \citet{capra1997accelerated}

	This article is organized as follows. Section~\ref{sec:model} briefly reviews classical FPCA,  presents our covariate dependent model for the mean and covariance functions, and introduces the roughness penalty. A brief summary of the competing method of \citet{li2016supervised} is also included at the end of Section~\ref{sec:model}. Section~\ref{sec:algorithm} details our algorithm, which makes use of several techniques to reduce computational cost. In Sections~\ref{sec:simulation} and~\ref{sec:application}, we compare our method with the approaches proposed by  \citet{li2016supervised}, \citet{james2000principal},  and \citet{jiang2010covariate} through a simulation study and an astronomical data analysis. Brief discussion is found in Section \ref{sec:discussion}. The Supplementary Materials provide additional numerical results for the simulation study and real data analysis. It also includes the  details of our roughness penalty, technical proofs, and more details of the method of \citet{li2016supervised}.

	\section{COVARIATE DEPENDENT FPCA MODEL}\label{sec:model}
	
	\subsection{Classical FPCA}\label{sec:classical}
	
	Suppose that there are $N$ latent functions of interest, denoted by $x_n$, for $n=1,\dots,N$. Let  $x_n(t)$ denote the value of the $n{\text{-th}}$  latent function at time $t$, for $n = 1, \dots, N$. More generally, we could consider functions of other types of variable, such as location, but here restrict our attention to functions of time. The covariance function $\text{cov}(x_n(t),x_n(s))=G(t,s)$ has eigen-decomposition
	$ G(t,s) = \sum_{j=1}^{\infty}d_jf_j(t)f_j(s)$,
	where $f_j$ is the $j$-th eigenfunction or FPC and $d_j$ is the corresponding eigenvalue. The eigenfunctions are orthonormal to each other, and 
	the eigenvalues are ordered non-increasingly  $d_1 \geq d_2\geq d_3\geq \cdots$.
	By the Karhunen-Lo\`{e}ve theorem, each function $x_n(t)$ can be expressed by a linear combination of a mean function  $\mu(t)$  and the above eigenfunctions, i.e., 
	\begin{align}
		x_n(t) = \mu(t) + \sum_{j=1}^{\infty}\xi_{j}^{(n)}f_j(t),\label{eqn:fpca}
	\end{align}
	where the $\xi_{j}^{(n)} = \int x_n(t)f_j(t) \intd t$ are uncorrelated random variables with mean $0$ and variance $d_j$, for $j=1,2,\dots$. It is often assumed that $\xi_{j}^{(n)}$ follows the Gaussian distribution $\mathcal{N}(0,d_j)$.  The low-rank approximation used by \cite{james2000principal} and others truncates the summation in (\ref{eqn:fpca}), and hence also the eigen-decomposition for the covariance function $G$. This low rank approximation usually works well in practice because  functional data is characterized by its low intrinsic dimension in an infinite dimensional space. Besides,
	we typically only  observe a noisy version of $x_n(t)$, which we denote by $y_n(t)$, for $n=1,\dots,N$. Putting these two points together, we write
	\begin{equation} 
		y_n(t) =x_n(t)+\epsilon_n(t)
		\approx\mu(t)+\sum_{j=1}^r \xi_{j}^{(n)}f_j(t) + \epsilon_n(t)
		=\mu(t) + \vf ^{T}(t) \vxi^{(n)} +\epsilon_n(t),
		\label{EC0}
	\end{equation}
	where $\epsilon_n(t)$ denotes white noise with mean $0$ and variance $\sigma_e^2$. The right-hand side of (\ref{EC0}) uses the vector notation $\vf(t) = (f_1(t),\dots,f_r(t))^{T}$ and $\vxi^{(n)} = (\xi_1^{(n)},\dots,\xi_r^{(n)})^{T}.$  The mixed effects interpretation introduced by \cite{james2000principal} is developed from~(\ref{EC0})  by replacing $\mu(t)$ and $f_j(t)$, for $j=1,\dots,r$, by basis expansions. The resulting expression then has some {\it fixed} basis coefficients and some {\it random} coefficients, hence the mixed effects interpretation. 
	
	%As data collection explodes in many areas, more and more functional data are accompanied by covariates $\vz$. However, the mean function and eigenfunctions used by the above classical FPCA method do not depend on  covariates. In our approach both the mean function and the covariance structure depend on the covariates. Since the covariance matrix of the observed data lies on a symmetric positive semi-definite matrix manifold, our model constructs a map from Euclidean space to this manifold. Some recent papers have explored methods to model manifold-valued data, e.g., \cite{lin2017extrinsic}.
	
	%Since we assume that both mean and covariance structure  depend on covariate $\vz_n$, we need to construct the maps between mean and covariance structure, respectively. For mean function, we model it by directly applying the tensor product spline basis. For convariance function, since the covariance structure is on a symmetric positive semi-definite matrix manifold, we model the covariance function by constructing a map from Euclidean space to a symmetric positive semi-definite matrix manifold.
	
	\subsection{Model Extension to Include Covariates}
	
	%For each latent function $x_n$, suppose that we have  noisy observations at the  time points $t_1^{(n)},
	%\dots,t_{m_n}^{(n)}$, denoted by $\vy_n = (y_n(t_1^{(n)}),\dots, y_n(t_{m_n}^{(n)}))^T$. All the time points are assumed to reside in a compact domain $\sT=[t_{\min},\, t_{\max}]$, but do not need to be regularly spaced or the same for different values of $n$.
	%Lastly, suppose that  covariates $\vz_n \in \mathcal{Z}$ are recorded along with $\vy_n$, for each $n=1,\dots,N$, where $\mathcal{Z}$ is also a compact domain.
	
	Let $\vz$ denote a vector of covariates, and suppose that $t\in \sT$ and $\vz \in \mathcal{Z}$, where $\sT=[t_{\min},\, t_{\max}]$ and  $\mathcal{Z}$ are compact domains.
	To incorporate covariate dependence, we replace \eqref{EC0} by
	\begin{equation} \label{eqn:supervisedKLoeve}
		y_n(t, \vz) 
		\approx \mu(t,\vz) + \sum_{j=1}^r f_j(t,\vz) \xi_j^{(n)} +\epsilon_n(t).
	\end{equation}
	%In practice, $y_n$ is observed at a finite collection of time points $t_1^{(n)},\dots,t_{m_n}^{(n)}$ and have a specific covariate vector associated with it $\vz_n$, but for brevity we use the generic notation $t$ and $\vz$ until specifying the likelihood in Section \ 
	%Here, and for the remainder of the article, the equality sign is used to indicate our model of the real-world functions, rather than the mathematical .   it is to be understood that  sign, and it is to be understood that 
	%{\color{blue} [The $\approx$ sign is a little bit awkward becasue in Section 2.5 we switch to an equals sign.]} 
	In this model, both the mean function $ \mu(t,\vz) $ and the FPCs $f_j(t,\vz)$ are allowed to vary smoothly with the covariates $\vz$, as will be explained further in Sections \ref{sec:mean} and \ref{sec:covariance} below. In the rest of this work, we primarily focus on the case where $\vz\in \mathbb{R}$ is a scalar in order to avoid the curse of dimensionality. Extension to the multivariate case is discussed in Section~\ref{sec:discussion}. 
	
	The score vector $\vxi^{(n)}$   follows a multivariate Gaussian distribution with zero mean  and diagonal covariance matrix $\mD_{\vz}$, which also depends on $\vz$, i.e., $\vxi^{(n)}\sim \mathcal{N}_r(\vzero, \mD_{\vz})$, where $\mathcal{N}_r$ denotes a multivariate Gaussian distribution of dimension $r$. For a given covariate  $\vz$,   the rank $r$ approximation to the true covariance function is
	$G(t,s | \vz) \approx \vf ^{T}(t,\vz) \mD_{\vz}\vf (s,\vz)$, with 
	the elements of the vector $\vf (t,\vz)$ being  $f_j(t,\vz)$, for $j=1,\dots,r$.  Thus, the rank of the underlying covariance function is $r$, regardless of $\vz$.
	%This low-rank structure is reasonable because in most cases a small number of eigenfunctions (e.g., two or three) capture most of the data variation. 
	
	\subsection{Covariate Dependent Mean Function}\label{sec:mean}
	
	In model~\eqref{eqn:supervisedKLoeve}, we  allow the mean function $\mu(t, \vz)$
	to vary smoothly with both time $t$ and the covariates $\vz$, which we achieve using a tensor product spline basis. Firstly, dependence on $t$ is captured by an orthonormal cubic spline with  $l'$ equally spaced knots in the temporal domain $\sT$, i.e., with $l = l'+2$ degrees of freedom. Similarly, dependence on the covariates $\vz$ is captured  using an orthonormal cubic spline with $p'$ knots, and therefore $p = p' + 2$ degrees of freedom.  In our implementation, we employ orthonormalized B-spline basis because this facilitates computing its derivatives \citep{butterfield1976computation} and expressing the roughness penalties later.

	Let  $\va(t)\in \mathbb{R}^{l}$ and $\vu({\vz})\in \mathbb{R}^{p}$ be the values of the  B-spline basis functions evaluated at $t$ and $\vz$, respectively. They are  orthonormalized such that $\int_\sT \va(t)\va(t)^T\mathrm{d}t = \mI_l$ and $\int_\sT \vu(\vz)\vu(\vz)^T\mathrm{d}\vz = \mI_p$, where $\mI_A$ denotes the identity matrix of size $A\times A$, for $A=l,p$. With this notation, our proposed mean function is given by
	\begin{equation}
		\mu(t,\vz) \approx \va(t)^T\mTheta_{\mu} \vu(\vz) =  \sum_{i=1}^l\sum_{j=1}^pa_i(t)u_j(\vz)\theta_{ij} = \mH(t,\vz)^T\vtheta_{\mu},
		\label{meanProduct}
	\end{equation}
	where  $\mH(t,\vz) = \vu(\vz) \otimes \va(t)$ is a tensor product spline basis,  $\mTheta_{\mu} = \big(\theta_{ij} \big)\in \mathbb{R}^{l\times p}$ is the matrix of  basis coefficients, and $\vtheta_{\mu} = \mathrm{vec}(\mTheta_{\mu})$ is its vectorization.

	\subsection{Covariate Dependent  Covariance Function}\label{sec:covariance}

	To model the covariance $G(t,s| \vz)$, we  rely on an orthonormal cubic B-spline basis $\vb(t)\in\bbR^w$ with $w-2$ knots in the temporal domain $\sT$.   The orthonormality of the basis  also helps to easily obtain estimates of the underlying FPCs after training our model. In general we set $w > r$ to facilitate accurate spline approximations of the true eigenfunctions under the assumption of low rank structure.
	Our model covariance function (an approximation to the true covariance $ G(t,s | \vz) $) can now be introduced as
	\begin{equation} \label{Approx2}
		G(t,s | \vz)  \approx \vb(t)^T \mSigma(\vz; \vbeta)\vb(s),
	\end{equation}
	where $\mSigma(\vz; \vbeta)\in\mathbb{R}^{w\times w}$
	is a  matrix that depends on the covariates $\vz$ and is parameterized by coefficients $\vbeta$ to be specified below.

	We must ensure that our covariance function in~\eqref{Approx2} is symmetric positive semi-definite.  This is equivalent to requiring $\mSigma(\vz\, ;\vbeta )$ to be a symmetric positive semi-definite matrix
	for each $\vz$.  Based on the ideas in  \cite{zhu2009intrinsic}, we construct
	$\mSigma(\cdot\, ;\vbeta )$ using a map from  $\mathcal{Z}$ to  symmetric positive semi-definite rank $r$ matrices, i.e.,
	\begin{equation} 
		\mSigma(\cdot;\vbeta):\ \vz \mapsto\mSigma(\vz;\vbeta) = \mC(\vz;\vbeta)\mC(\vz;\vbeta)^T.
		\label{EC1}
	\end{equation}
	In the above, $\mC(\vz; \vbeta) \in \mathbb{R}^{w\times r}$  depends on the covariates $\vz$ and the unknown coefficients  $\vbeta$. 
	The structure~\eqref{EC1} is similar to that of Cholesky decomposition, except that the matrix $\mC(\vz;\vbeta)$ is not required to be lower triangular.

	With the help of~\eqref{EC1}, the  construction of a positive semi-definite 
	$\mSigma(\vz;\vbeta)$
	is reduced to the construction of a general matrix $\mC(\vz;\vbeta)$ \textit{without} restriction. We set its $(i,j)$-th element $C_{ij}(\vz) = \vv(\vz)^T\vbeta_{ij}$, where $\vv(\vz)\in \mathbb{R}^q$  is an orthonormal B-spline basis and  $\vbeta_{ij}\in \mathbb{R}^q$  is the corresponding coefficient, for $i=1,\dots,w$ and $j=1,\dots,r$. In summary,  our model  for  $\mC(\vz;\vbeta)$ (and hence $\mSigma(\vz;\vbeta)$)
	has $w\times r \times q$ unknown parameters, which are  collected in a matrix $\mGamma \in \mathbb{R}^{(qw) \times r}$:
	\begin{equation}
		\mGamma=\left( \begin{matrix}
			\vbeta_{11}&\vbeta_{12}&\cdots&\vbeta_{1r}\\
			\vbeta_{21}&\vbeta_{22}&\cdots&\vbeta_{2r}\\
			\vdots&\vdots&\ddots&\vdots\\
			\vbeta_{w1}&\vbeta_{w2}&\cdots&\vbeta_{wr}
		\end{matrix}\right).
		\label{mGamma}
	\end{equation}
	Let $\vbeta = \text{vec}(\mGamma)$ denote the vectorized version of this matrix.
	It  readily follows that the factor matrix 
	$\mC(\vz; \vbeta)$ in~\eqref{EC1} can be written as $\mC(\vz; \vbeta) = (\mI_w\otimes\vv(\vz)^T)\mGamma$.

	\subsection{Model Negative Log-likelihood} \label{sec:negloglik}

	Based on the above parameterization, our model for $y_n$ in~\eqref{eqn:supervisedKLoeve} an now be written as
	\begin{align}
		y_n(t,\vz) &\approx \mH(t,\vz)^T\vtheta_{\mu} +
		\vb(t)^T\mC(\vz; \vbeta)\vpsi^{(n)}   + \epsilon_n(t),\label{E3}
	\end{align}
	where $\vpsi^{(n)}\sim\mathcal{N}_r(\textbf{0}, \mI_r)$, with $\textbf{0}$ being the zero vector of length $r$, and $\epsilon_n(t) \sim \mathcal{N}(0,\sigma_e^2)$ is white noise.  It follows that $y_n(t,\vz)$ with fixed $\vz$ is a Gaussion process with mean $\mH(t,\vz)^T\vtheta_{\mu}$ and covariance function $\vb(t)^T\mSigma(\vz;\vbeta) \vb(s) + \sigma_e^2I(s=t)$.  Given the matrices and parameters in~\eqref{E3}, the  original eigenfunctions $\vf (s,\vz)$ and scores $\vxi^{(n)}$ in~\eqref{eqn:supervisedKLoeve} can be easily recovered at a fixed  $\vz$ via the next lemma. See Section~S.7.1 of the Supplementary Materials for its proof and further discussion.
	\begin{lemma} \label{lemma:eigenrelation}
		Let $\mSigma(\vz;\vbeta) =  \mTheta_{\vz} \mD_{\vz}\mTheta_{\vz}^T$ be the  eigen-decomposition 
		of the matrix in~\eqref{EC1}, where $\mD_{\vz}\in\mathbb{R}^{r\times r}$ is a diagonal matrix with the non-zero eigenvalues and $\mTheta\in\bbR^{w\times r}$ contains the corresponding eigenvectors in its columns. Denote $\mTheta_{\vz,j}$  as the $j$th eigenvector, then $f_j(t,\vz) = \vb^T(t) \mTheta_{\vz,j}$ for $j=1,\dots, r$. Besides, the  correspondence relation between $\vxi^{(n)}$ and $\vpsi^{(n)}$ is given by $\vxi^{(n)}=\mD_{\vz}^{1/2}\mV_{\vz}^T\vpsi^{(n)}$ where
		$\mV_{\vz} = [\mC(\vz;\vbeta)^T\mC(\vz;\vbeta)]^{-1}\mC(\vz;\vbeta)^T\mTheta_{\vz}\mD_{\vz}^{1/2}$. 
	\end{lemma}

	In practice, $y_n$ is only observed at a finite collection of observation times $\vt_n=(t_1^{(n)},\dots,t_{m_n}^{(n)})$, and has a specific value of the covariates associated with it, which we denote by $\vz_n$. 
	%When considering multiple observation times $t_1^{(n)},\dots,t_{m_n}^{(n)}$ for the $n$-th observation, we 
	To simplify notation we collect the  basis evaluations $\vb(t_i^{(n)})$ for the covariance function (see (\ref{Approx2})) and the tensor product basis evaluations $\mH(t^{(n)}_i,\vz_n)$ for the mean function (see (\ref{meanProduct})) into matrices $\mB_n$ and $\mH_n$, respectively, i.e.,
	\begin{equation} \label{eqn:designMatB}
		\mB_n= (\vb(t^{(n)}_1),\vb(t^{(n)}_2),\dots,\vb(t^{(n)}_{m_n}))^T,
	\end{equation}
	and
	\begin{equation} \label{eqn:designMatH}
		\mH_n = (\mH(t^{(n)}_1,\vz_n),\mH(t^{(n)}_2,\vz_n),\dots,\mH(t^{(n)}_{m_n},\vz_n))^T.
	\end{equation}
	Thus, the observations $\vy_n = (y_n(t_1^{(n)}),\dots, y_n(t_{m_n}^{(n)}))^T$ follow a multivariate Gaussian with mean 
	$\mH_n \vtheta_{\mu}$ and covariance matrix $\mSigma_n = \mB_n \mC_n\mC_n^T\mB_n^T +  \sigma_e^2 \mI_{m_n}$, where $\mC_n = \mC(\vz_n,\vbeta)$.
	With this notation, the  negative log-likelihood of the full dataset  is   proportional to
	% \sum_{n=1}^N - {\color{blue} 2}\log L(\vy_n,\vz_n)  \;{\color{blue}\propto}\;
	\begin{equation}
		\mathcal{L}(\vtheta_{\mu}, \vbeta, \sigma_e^2 ) := \sum_{n=1}^N  \log \det \mSigma_n + \mathrm{tr}(\mS_n\mSigma_n^{-1}),
		\label{likelihood}
	\end{equation}
	where $\mS_n = (\vy_n - \mH_n\vtheta_{\mu})(\vy_n- \mH_n\vtheta_{\mu})^T$. 
	
	%Suppose that $\hat{\vtheta}_{\mu}$ and $\hat{\vbeta}$ are the maximum likelihood estimates (MLEs) of the parameters obtained by minimizing \eqref{likelihood}. For a given value of $\vz$, we can estimate the $j$-th FPC by $\hat{f}_j(t, \vz) = \vb(t)^T\hat{\vv}_j$, where $\hat{\vv}_j$ denotes the  eigenvector corresponding to the $j$-th largest eigenvalue obtained from the  eigen-decomposition of $\mSigma(\vz;\hat{\vbeta}) = \mC(\vz;\hat{\vbeta})\mC(\vz;\hat{\vbeta})^T$.  Note the eigenvectors   $\hat{\vv}_j$  implicitly depend on the covariates $\vz$ through the decomposition of $\mSigma(\vz;\hat{\vbeta})$.   In addition, for a fixed $\vz$,  the estimated eigenfunctions are orthonormal to each other, because the eigenvectors ($\hat{\vv}_j$'s)  are orthonormal to each other and   $\vb(\cdot)$ is  an orthonormal basis.

	To encourage smoothness of the estimated mean  and covariance functions, we follow the roughness penalty approach of \cite{wood2006low} and \cite{reiss2014varying}. The penalty construction are given in Section~S.3 of the Supplementary Materials. Combining the negative log-likelihood (\ref{likelihood}) with the roughness penalties given by~(S.10) and~(S.11), we obtain the following objective function to be minimized with respect to the parameters $\vtheta_{\mu}, \vbeta, \sigma_e^2$:
	\begin{align}
		\mathcal{L} + \mathcal{P} &=\sum_{n=1}^N  \{\log \det \mSigma_n + \mathrm{tr}(\mS_n\mSigma_n^{-1})\}\nonumber\\
		&	\qquad\qquad\qquad+\vtheta^T_{\mu}(\lambda^{(\mu)}_t \widetilde{\mS}_t^{(\mu)}+ 	\lambda^{(\mu)}_{\vz}\widetilde{\mS}_{\vz}^{(\mu)})\vtheta_{\mu}+
		\vbeta^T(\lambda_t\mI_r\otimes\widetilde{\mS}_t+
		\lambda_{\vz}\mI_r\otimes\widetilde{\mS}_{\vz})\vbeta,
		\label{objective}
	\end{align}
	where $\mathcal{L}$ and $\mathcal{P}$ denote the log-likelihood and penalty terms, respectively. 
	%\sout{ We use cross-validation to choose the four tuning parameters $\lambda_t,\lambda_{\vz},\lambda^{(\mu)}_t$, and $\lambda^{(\mu)}_{\vz}$. The next section presents an algorithm for minimizing \eqref{objective}.}

	According to the penalized spline literature, the number of spline basis is not important and does not need tuning as long as a relatively large number of spline basis functions is used, and the penalty parameter does the fine tuning of amount of smoothing. We thus  fix the number of spline basis functions to be a relatively large number. It only remains to choose the four tuning parameters ($\lambda$'s)  and the number $r$ of FPCs. Selecting among all  possible candidate combinations of the tuning parameters is time consuming. We therefore adopt the following strategy. We first select the tuning parameters ($\lambda_t^{(\mu)}$ and $\lambda_{\vz}^{(\mu)}$) for  the mean function via  generalized cross-validation. After fixing $\lambda_t^{(\mu)}$ and $\lambda_{\vz}^{(\mu)}$, we select the tuning parameters ($\lambda_t$ and $\lambda_{\vz}$)  for the FPCs  via $K$-fold cross-validation with the proposed loss~\eqref{likelihood}. Finally, the number of FPCs can be chosen based on  a scree-plot or the fraction of  variance explained (FVE). More details can be found in Section~S.4 of the Supplementary Materials.

	\subsection{Connection with SupSFPC}
	
	We conclude this section by clarifying the distinction between our approach and Supervised Sparse and Functional PCA  \citep[SupSFPC,][]{li2016supervised}, because the latter  is the only alternative scalable method that incorporates covariate information.  The   SupSFPC  method assumes that the score vector $\vxi^{(n)}$ in~\eqref{EC0} is linearly related to the covariates $\vz_n$, i.e.,
	\begin{align}   \label{eqn:SupSFPC:score}
		\vxi^{(n)} = \vtau_0 + \mT^T\vz_n + \vgamma^{(n)},
	\end{align}
	for  an  intercept vector $\vtau_0\in \mathbb{R}^r$ and  a coefficient matrix $\mT$. The vector $\vgamma^{(n)}\in \mathbb{R}^r$ follows a Gaussian distribution  with mean zero and diagonal covariance matrix $\mSigma_{\vgamma}$. Under this assumption, \eqref{EC0} can be written as
	\begin{align}  \label{eqn:SupSFPC:model}
		y_n(t)  =  \mu(t) + \vtau_0^T\vf(t)+ \vz_n^T\mT\vf(t) + \big[(\vgamma^{(n)})^T\vf(t) + \epsilon_n(t)\big].
	\end{align}
	This implies the conditional distribution $y(t) | \vz_n$ follows a Gaussian process with mean 
	\begin{align}  \label{eqn:SupSFPC:mean}
		\Expect (y_n(t)|\vz_n) = \mu(t) + \vtau_0^T\vf(t)+ \vz_n^T\mT\vf(t),
	\end{align} 
	and covariance function $\text{cov}(y_n(t),y_n(s)|\vz_n) =\vf(t)^T\mSigma_{\vgamma}\vf(s)+\sigma_e^2 I(s=t)$. Note that under the SupSFPC model the covariance function as a constant function with respect to the $\vz$. Ignoring the sparse structure
	and its extra requirement of common measurement time points across all curves,
	the SupSFPC model can be considered  a special case of our model. Indeed, the spline basis used by CD-FPCA can incorporate a linear relation between the mean function and the covariate $\vz_n$, as well as a constant relation between the covariane function and the covariate $\vz_n$. 
	Our full model is well suited to modeling both a smooth covariate-driven mean function {\it and} a smooth covariate-driven covariance function.   More details of  SupSFPC are provided in Section~S.5  of the Supplementary Materials.

	\section{ALGORITHM}\label{sec:algorithm}
	
	%In this section, we propose an algorithm for parameter estimation of the SFPDM model. We summarize our main algorithm in Section \ref{subsectionModeltraining} which also contains the method of getting good initial values of our algorithm. We derive the scores prediction in section \ref{subsectionScorePrediction}. At the end of this section, we will introduce the method to reduce the computational cost, more details can be found in the Appendix.
	
	\subsection{Model Training}\label{subsectionModeltraining}

	Given noisy observations of a collection of latent functions, we can estimate the mean and covariance  functions by optimizing~\eqref{objective}.  
	However, optimization is challenging  because the objective function~\eqref{objective} is non-convex. Moreover, evaluating the log-likelihood $\mathcal{L}$ and its gradients  involves computing the inverses of
	$\mSigma_n\in \bbR^{m_n\times m_n}$, for $n=1,\dots,N$, which has a combined computational cost of $\mathcal{O}(\sum_n m_n^3)$. In practice, this latter issue is exacerbated by the need to perform exploratory evaluations of the objective function to select a good  step size, i.e., the tuning parameter that controls the magnitude of changes in the parameters in each iteration of the optimization algorithm. We overcome these problems by proposing an
	%{\color{blue}[fast?]}{\color{blue}[This solves the first challenge, i.e., the objective function is non-convex.]}
	initialization algorithm which  identifies good 
	initial parameter values, and  by developing efficient ways to evaluate the objective function \eqref{objective} and its gradient. %{\color{blue}[This solves the second challenge, i.e., fast.]}
	Our gradient descent based optimization strategy is summarized in Algorithm~\ref{codeEntireAlgorithm}. The algorithm iteratively updates the parameters $\vtheta_{\mu}$, $\vbeta$, and $\sigma_e^2$  until convergence.

	%Our method works well with a number of different optimization algorithms, such as the Broyden–Fletcher–Goldfarb–Shanno (BFGS) algorithm proposed by \cite{head1985broyden}. However,  optimization algorithms typically
	%many optimization algorithms {\color{blue}[all the ones considered? Otherwise Steps 1-2 wouldn't be needed.]}{\color{blue}[Step 1 solves the non-convex problem, hence all the ones considered.]}, such as gradient descent, 
	%require selection of the step size, i.e., the tuning parameter that controls the magnitude of changes in the parameters in each iteration of the algorithm. Step size selection is computationally expensive because it requires numerous evaluations of the objective function.  
	%Below we provide computationally efficient expressions for the gradients used in Algorithm 1. Specifically, b
	%{\color{blue} \sout{In what follows, we repeatedly apply the determinant and Sherman-Morrison-Woodbury lemmas to reduce the per curve computational cost of evaluating the objective \eqref{objective}  and its gradients in Steps 4-6 of Algorithm 1 from  $\mathcal{O}(m_n^3)$ to $\mathcal{O}(r^3)$.}}
	
	%{\color{blue} (Shiyuan: Obtaining the actual computational complexity seems to be complicated, as evaluating the objective function and its gradient involves a series of matrix products.)} 
	In what follows, we repeatedly apply 
	the matrix determinant lemma and Sherman-Morrison-Woodbury formula to reduce the per curve cost of computing the matrix inverse and log-determinant in Steps 4-6 of Algorithm 1 from  $\mathcal{O}(m_n^3)$ to $\mathcal{O}(r^3)$.
	Initialization of  $\vbeta$ (Step 1) is discussed at the end of this subsection.
	Proofs for all the results below are given in Section~S.7 of the Supplementary Materials. We begin with Lemma~\ref{lemmloglikehood} which presents  a more efficient expression for the log-likelihood $\mathcal{L}$ appearing in \eqref{objective} (also see \eqref{likelihood}).

	\begin{algorithm}[t]
		\caption{Modified gradient descent for optimizing CD-FPCA objective  (\ref{objective})}
		\begin{algorithmic}[1]
			\STATE Initialize $\vbeta$ using Algorithm \ref{code:Initial} (below);
			\STATE Initialize  $\vtheta_{\mu}$  to be the zero vector, and $\sigma^{2}_e$ to be an appropriate small positive value;
			%		\eindent
			\REPEAT
			\STATE  Update $\vtheta_{\mu}$ by  gradient descent until convergence, gradient is sum of \eqref{fgratheta} and~\eqref{eqn:penaltyGrad:theta};
			\STATE  Update $\vbeta$ by  gradient descent until convergence, gradient is sum of ~\eqref{grabeta} and~\eqref{eqn:penaltyGrad:beta};
			\STATE  Update $\sigma^{2}_{e}$  by  gradient descent until convergence, gradient is given by (\ref{fgrasigma});
			\UNTIL{convergence}
		\end{algorithmic}
		\label{codeEntireAlgorithm}
	\end{algorithm}

	\begin{lemma}\label{lemmloglikehood}
		The loss function $\mathcal{L}$ given by (\ref{likelihood}) is equal to
		\begin{align}
			2\sum_{n=1}^N \log\det(\mF_n)
			-\sigma_e^{-4} \sum_{n=1}^N \Vert \vh_n\Vert_2^2
			+ \sigma_e^{-2} \sum_{n=1}^N \Vert \vy_n-\mH_n\vtheta_{\mu}\Vert_2^2+\sum_{n=1}^Nm_n\log\sigma_e^2,
		\end{align}
		where $\mF_n$ is the Cholesky factor of $\mI_r+\sigma_e^{-2}\mW_n$ (i.e., $\mF_n \mF_n^T = \mI_r+\sigma_e^{-2}\mW_n$),    with $\mW_n = \mC_n^T\mB_n^T\mB_n \mC_n$, and $\vh_n = \mF_n^{-1}\mC^T_n \mB_n^T(\vy_n-\mH_n\vtheta_{\mu})$.
	\end{lemma}

	Next, it is straightforward to verify that the gradients of the log-likelihood (\ref{likelihood})  with respect to $\vtheta_{\mu}$ and $\sigma_e^2$ are
	\begin{equation}
		\frac{\partial{\mathcal{L}}}{\partial{\vtheta}_{\mu}} = \sum_{n = 1}^{N}2\mH_n^T\mSigma_n^{-1}(\mH_n\vtheta_{\mu}-\vy_n),
		\label{gratheta}
	\end{equation}
	and
	\begin{equation}
		\frac{\partial \mathcal{L}}{\partial \sigma_e^2} =
		\sum_{n=1}^{N} \mathrm{tr}(\mSigma_n^{-1})-\sum_{n=1}^N(\vy_n-\mH_n\vtheta_{\mu})^T\mSigma_n^{-2}(\vy_n-\mH_n\vtheta_{\mu}),
		\label{grasigma}
	\end{equation}
	respectively. Following a  similar approach as for Lemma~\ref{lemmloglikehood},  these gradients can be expressed in a computationally more efficient way.

	\begin{lemma}\label{lemmagratheta}
		Let	 $\mF_n$ and $\vh_n$ be defined as in Lemma~\ref{lemmloglikehood} above.	The gradient $\frac{\partial \mathcal{L}}{\partial \vtheta_{\mu}}$ given by (\ref{gratheta}) can be expressed as
		\begin{align}
			\frac{\partial \mathcal{L}}{\partial \vtheta_{\mu}}= \sum_{n=1}^N-2\sigma_e^{-2}\mH_n^T (\vy_n-\mH_n\boldsymbol{\vtheta}_{\mu})+2\sigma_e^{-4}\mH_n^T \mE_n^T\vh_n,
			\label{fgratheta}
		\end{align}
		where $\mE_n = \mF_n^{-1}\mC_n^T\mB_n^T$.
		
	\end{lemma}
	
	\begin{lemma}\label{lemmagrasigma}
		The gradient $\frac{\partial \mathcal{L}}{\partial \sigma_e^2}$ given by (\ref{grasigma}) can be expressed as
		\begin{align}
			\frac{\partial \mathcal{L}}{\partial \sigma_e^2}&=	 \sum_{n=1}^N\Big[ m_n\sigma_e^{-2}-\sigma_e^{-4}\mathrm{tr}(\mE_n^T\mE_n)  - \sigma_e^{-4}\Vert \vy_n-\mH_n\vtheta_{\mu}\Vert_2^2 \nonumber \\
			&\qquad  \qquad\qquad\qquad \qquad\qquad
			+2 \sigma_e^{-6}\|\vh_n\|_2^2-\sigma_e^{-8}\vh_n^T\mE_n\mE_n^T\vh_n\Big],
			\label{fgrasigma}
		\end{align}
		where $\vh_n$ and $\mE_n$ are as  given in Lemma \ref{lemmloglikehood} and Lemma \ref{lemmagratheta} above.
	\end{lemma} 
	
	We now consider the gradient of $\mathcal{L}$ with respect to $\vbeta$.
	Let $\beta_{ijk}$ denote the
	$k$-th element of the vector $\boldsymbol{\beta}_{ij}$. We have
	\begin{equation}
		\frac{\partial \mathcal{L}}{\partial \beta_{ijk}} =
		\sum_{n=1}^N \Big\langle \frac
		{\partial{\mathcal{L}}}{\partial{\mC_n}}, \frac{\partial \mC_n}{\partial \beta_{ijk}}\Big\rangle,
		\label{grabeta}
	\end{equation}
	where the inner product is defined as $\left\langle \mA, \mB \right\rangle = \mathrm{tr}(\mA^T\mB)$, $\frac{\partial{\mC_n}}{\partial{\beta_{ijk}}}$ is the matrix of zeros except that its $(i,j)$ element is $v_k(\vz)$, which is the $k$-th element of the basis $\vv(\vz)$. It also holds that 
	\begin{align}
		\frac{\partial{\mathcal{L}}}{\partial{\mC_n}} = 2\times \mB_n^T\big[\mSigma_n^{-1}-\mSigma_n^{-1}\mS_n\mSigma_n^{-1}\big]\mB_n\mC_n.
		\label{R1}
	\end{align}
	Lemma~\ref{lemmagrabeta} below gives a more computationally efficient expression for~\eqref{R1}.
	
	\begin{lemma}\label{lemmagrabeta}
		The gradient $\frac{\partial{\mathcal{L}}}{\partial{\mC_n}}$ given by (\ref{R1}) can be expressed as
		\begin{align}
			\frac{\partial{\mathcal{L}}}{\partial{\mC_n}} &= 
			2\sigma_e^{-2}(\mB_n^T\mB_n\mC_n-\mB_n^T\mK_n\mW_n)\nonumber\\
			&\qquad\qquad-2\sigma_e^{-4}(\mB_n^T-\mB_n^T\mK_n\mC_n^T\mB_n^T)\mS_n(\mB_n\mC_n-\mK_n\mW_n),
		\end{align}
		where $\mW_n = \mC_n^T\mB_n^T\mB_n \mC_n$ and $\mK_n = \mB_n\mC_n\{\sigma_e^2\mI_r+\mW_n\}^{-1}$.
	\end{lemma}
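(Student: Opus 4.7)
The plan is to take the expression~\eqref{R1} as the starting point and apply the Sherman--Morrison--Woodbury (SMW) identity to $\mSigma_n^{-1}$, then group terms so that the factor $\mK_n$ emerges naturally. Recall that $\mSigma_n = \sigma_e^2\mI + \mB_n\mC_n\mC_n^T\mB_n^T$, which is a low-rank update of a scalar multiple of the identity. This is precisely the setting where SMW gives a huge simplification, analogous to how it was used for Lemma~\ref{lemmloglikehood}.

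First I would write
\begin{equation*}
\mSigma_n^{-1} = \sigma_e^{-2}\mI - \sigma_e^{-2}\mB_n\mC_n(\sigma_e^2\mI_r+\mW_n)^{-1}\mC_n^T\mB_n^T = \sigma_e^{-2}\mI - \sigma_e^{-2}\mK_n\mC_n^T\mB_n^T,
\end{equation*}
where the last equality uses the definition $\mK_n=\mB_n\mC_n\{\sigma_e^2\mI_r+\mW_n\}^{-1}$. From this, the two basic building blocks needed for~\eqref{R1} are obtained by right/left multiplication:
\begin{equation*}
\mSigma_n^{-1}\mB_n\mC_n = \sigma_e^{-2}\bigl(\mB_n\mC_n - \mK_n\mW_n\bigr), \qquad \mB_n^T\mSigma_n^{-1} = \sigma_e^{-2}\bigl(\mB_n^T - \mB_n^T\mK_n\mC_n^T\mB_n^T\bigr),
\end{equation*}
using the identity $\mC_n^T\mB_n^T\mB_n\mC_n=\mW_n$.

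Next I would substitute these into~\eqref{R1} and handle the two terms separately. The first term $2\mB_n^T\mSigma_n^{-1}\mB_n\mC_n$ becomes $2\sigma_e^{-2}(\mB_n^T\mB_n\mC_n-\mB_n^T\mK_n\mW_n)$, which matches the first claimed summand. For the second term $-2\mB_n^T\mSigma_n^{-1}\mS_n\mSigma_n^{-1}\mB_n\mC_n$, I would pair $\mB_n^T\mSigma_n^{-1}$ on the left of $\mS_n$ and $\mSigma_n^{-1}\mB_n\mC_n$ on the right, producing exactly $-2\sigma_e^{-4}(\mB_n^T-\mB_n^T\mK_n\mC_n^T\mB_n^T)\mS_n(\mB_n\mC_n-\mK_n\mW_n)$. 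Summing the two gives the claimed identity.

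The argument is essentially algebraic bookkeeping, so I do not anticipate a genuine mathematical obstacle; the main thing to be careful about is keeping track of the $\sigma_e^{-2}$ prefactors and consistently recognising the product $\mC_n^T\mB_n^T\mB_n\mC_n$ as $\mW_n$ whenever it appears, so that $\mK_n$ emerges in the advertised places. Once both building blocks are computed explicitly, plugging into~\eqref{R1} and collecting immediately yields the stated form.
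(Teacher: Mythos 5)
Your proof is correct and follows essentially the same route as the paper: apply the Sherman--Morrison--Woodbury identity to $\mSigma_n^{-1}$, recognise $\mC_n^T\mB_n^T\mB_n\mC_n=\mW_n$ so that $\mK_n$ appears, and substitute into~\eqref{R1}. Your write-up is in fact slightly more explicit than the paper's, which states the substitution and the final form without displaying the intermediate building blocks.
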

	
	Lastly, the gradients of the penalty  term $\mathcal{P}$  in~ (\ref{objective}) with respect to   $\vtheta_{\mu}$ and  $\vbeta$ are
	\begin{equation} \label{eqn:penaltyGrad:theta}
		\frac{\partial \mathcal{P}}{\partial \vtheta_{\mu}} = 2\lambda^{(\mu)}_t  \widetilde{\mS}_t^{(\mu)} \vtheta_{\mu} +  2\lambda^{(\mu)}_{\vz} \widetilde{\mS}_{\vz}^{(\mu)} \vtheta_{\mu}
	\end{equation}
	and
	\begin{equation} \label{eqn:penaltyGrad:beta}
		\frac{\partial \mathcal{P}}{\partial \vbeta} =
		2	\lambda_t (\mI_r\otimes\widetilde{\mS}_t)\vbeta+
		2	\lambda_{\vz} (\mI_r\otimes\widetilde{\mS}_{\vz}) \vbeta,
	\end{equation}
	respectively. The gradients of the objective function \eqref{objective} with respect to $\vtheta_{\mu}$ and $\vbeta$ are  obtained by summing~\eqref{gratheta} and~\eqref{eqn:penaltyGrad:theta} and summing
	~\eqref{grabeta} and~\eqref{eqn:penaltyGrad:beta}, respectively.

	%Evaluating the log-likelihood and its gradients (\ref{gratheta}), (\ref{grasigma}), and (\ref{grabeta}) involves computing the inverse of
	%the $m_n\times m_n$  matrix $\mSigma_n$, which has a  computational cost of order $\mathcal{O}(m_n^3)$.  To reduce this computational cost we now  apply the determinant and Sherman-Morrison-Woodbury lemmas, i.e.,
	%
	%\begin{align}
	%\det(\mA+\mU\mV^T) = \det(\mI+\mV^T\mA^{-1}\mU) \det(\mA)
	%\end{align}
	%and
	%\begin{align}
	%(\mA+\mU\mV^T)^{-1} = \mA^{-1} - \mA^{-1} \mU(\mI+ \mV^T\mA^{-1}\mU)^{-1}\mV^T\mA^{-1},
	%\end{align}
	%respectively.
	%With these tools, the next four lemmas reduce the computational cost of computing the log-likelihood and its gradients from $\mathcal{O}(m_n^3)$ to $\mathcal{O}(r^3)$, see the Appendix for detailed proofs. %{\color{blue}[Are these results exact or approximate?]} {\color{blue} Reply: All results following are exact.}

	It remains to specify a procedure  to initialize  $\vbeta$ in Step 1 of Algorithm~\ref{codeEntireAlgorithm}, and our approach is summarized in Algorithm~\ref{code:Initial}. Step 1 of Algorithm~\ref{code:Initial} divides the covariate domain into small bins (or regions) $\mathcal{Z}_1,\dots,\mathcal{Z}_{U}$, and treats the observations in each bin as having a fixed covariates vector ${\vz}_{u}$, for $u = 1\dots,U$. The  covariate vectors ${\vz}_{u}$, for $u = 1\dots,U$, are set to be the mean observed covariate vector in each bin, i.e., $\vz_u := n_u^{-1}\sum_{n: \vz_n \in \mathcal{Z}_u}\vz_n$, where $n_u= |\mathcal{Z}_u |$ is the cardinality of the set $\mathcal{Z}_u$. In Step 2, for each bin, we  fit the classical FPCA model introduced in \eqref{EC0} to the subset of  functional observations falling in that particular  bin, i.e., we fit it separately to $\{(\vt_n,\vy_n):\vz_n\in\mathcal{Z}_u\}$, for each $u\in\{1,\dots,U\}$. Thus, we obtain  an estimated covariance function $\hat{G}(t,s| \vz_u) = \vb(t)^T \widehat{\mSigma}_{\vz_u} \vb(s)$ for each fixed $\vz_u$. %From our model construction in~\eqref{Approx2} and~\eqref{EC1}, the model parameters can be initialized via matching $\mC(\vz_u;\vbeta)$ against the matrix square root 
	%$\widehat{\mSigma}_{\vz_u}^{1/2}$.
	Finally,  making use of \eqref{EC1}, Step 3 initializes  $\vbeta$  by
	\begin{equation} \label{leasquare}
		\vbeta^{(0)} := \argmin_{\vbeta} \sum_{u=1}^U \|\widehat{\mSigma}_{\vz_u}^{1/2} - \mC(\vz_u;\vbeta)\|_F^2,
	\end{equation}
	where $\|\cdot\|_F$ denotes the Frobenius norm. 
	
	\begin{algorithm}[t]
		\caption{Initialization of $\vbeta$}
		\begin{algorithmic}[1]
			\STATE Divide the covariates domain into small bins (or regions) $\mathcal{Z}_1,\dots,\mathcal{Z}_{U}$.
			\STATE Fit the classical FPCA model \eqref{EC0} in each bin to obtain $\widehat{\mSigma}_{\vz_u}$, for $u=1,\dots,U$.
			\STATE Initialize $\vbeta$ by $\vbeta^{(0)}$ in (\ref{leasquare}).
		\end{algorithmic}
		\label{code:Initial}
	\end{algorithm}

	\subsection{Prediction}\label{subsectionScorePrediction}
	
	%{\color{blue} Is there any precedent for this approach, i.e., do  you have citations where others something similar?}{\color{blue} Similar to the conditional expectation estimator of scores proposed by \cite{yao2005functional} which is the best linear predictor.} 
	Suppose that we have applied Algorithm \ref{codeEntireAlgorithm} to a training dataset to get estimation of the related parameters. Now, we obtain noisy observations $\vy_{*} = (y_{*}(t_1^{(*)}),\dots, y_{*}(t_{m_{*}}^{(*)}))^T$ of a new latent function $x_{*}(t)$ at the time points $t_1^{(*)}, \dots,t_{m_{*}}^{(*)}$, together with a corresponding covariate  $\vz_{*}$. We want to estimate the scores $\vxi^{(*)}=(\xi_1^{(*)},\dots,\xi_r^{(*)})^T$ for the function $x_{*}(t)$ (see \eqref{eqn:supervisedKLoeve}), and thereby predict the value of a new  observation $y_{*}(t)$ at any $t\in\mathcal{T}$.
	
	Following \citet{yao2005functional}, our approach to this prediction task is motivated by the  empirical Bayes perspective. Let $\hat{\vtheta}_{\mu}$, $\hat{\vbeta}$ and  $\hat{\sigma}_e^2$ denote the estimates of the model parameters (see \eqref{E3}) obtained by applying Algorithm \ref{codeEntireAlgorithm} to the training dataset. For the new observations, we define basis matrices
	$\mB_*= (\vb(t^{(*)}_1),\dots,\vb(t^{(*)}_{m^*}))^T$
	and
	$\mH_* = (\mH(t^{(*)}_1,\vz_*), \dots,\mH(t^{(*)}_{m^*},\vz_*))^T$, analogous to \eqref{eqn:designMatB} and~\eqref{eqn:designMatH}, respectively. Next, we compute the eigendecomposition 
	$\mSigma(\vz_{*};\hat{\vbeta} ) = \mTheta_{*} \mD_{*}\mTheta_{*}^T$.
	Treating the parameter estimates $\hat{\vtheta}_{\mu}$, $\hat{\vbeta}$ and  $\hat{\sigma}_e^2$ as if they were the  the true values (i.e., the plug-in approach),  the joint distribution of $\vy_{*}$ and $\vxi^{(*)}$ is
	\begin{equation}
		\begin{pmatrix}
			\vy_{*} \\ \vxi^{(*)}
		\end{pmatrix}\sim \mathcal{N}_{2m^*}\left(
		\begin{pmatrix}
			\mH_{*}\hat{\vtheta}_{\mu} \\ \vzero
		\end{pmatrix},
		\begin{pmatrix}
			\mB_{*}
			\mSigma(\vz_{*}; \hat{\vbeta}) \mB_{*}^T+\hat{\sigma}_e^2\mI &&&
			\mB_{*}\mTheta_{*} \mD_{*} \\
			\mD_{*}\mTheta_{*} ^T\mB_{*}^T &&& \mD_{*}
		\end{pmatrix}\right).
		\label{pre_coef_matrix}
	\end{equation}
	%where $\mathcal{N}_{2m^*}$ denotes a multivariate Gaussian distribution of dimension $2m^*$. 
	This joint distribution results from assuming the prior $\vxi^{(*)}\sim \mathcal{N}( \vzero, \mD_{*})$, which is derived from the training data (and $\vz^*$), hence the empirical Bayes connection.  In a more complete empirical Bayes treatment $\vtheta_{\mu}$, $\vbeta$ and  $\sigma_e^2$ would also be assigned priors, but this introduces additional complications and computation  and is therefore avoided here. Based on \eqref{pre_coef_matrix}, the posterior distribution of $\vxi^{(*)}$ is again a multivariate Gaussian whose mean and covariance matrix are given by
	\begin{equation}
		\mathbb{E}(\vxi^{(*)}|\vy_{*}) = \mD_{*}\mTheta_{*}^T\mB_{*}^T(\mB_{*}
		\mSigma(\vz_{*}; \hat{\vbeta}) \mB_{*}^T+\hat{\sigma}_e^2\mI)^{-1}(\vy_{*}-\mH_{*}\hat{\vtheta}_{\mu})
		\label{E4}
	\end{equation}
	and
	\begin{equation}
		\mathrm{Cov}(\vxi^{(*)}|\vy_{*})  = \mD_{*} - \mD_{*}\mTheta_{*}^T\mB_{*}^T(\mB_{*}   \mSigma(\vz_{*}; \hat{\vbeta}) \mB_{*}^T + \hat{\sigma}_e^2\mI)^{-1}\mB_{*}\mTheta_{*} \mD_{*},
		\label{post_cov}
	\end{equation}
	respectively.

	Finally, combining \eqref{E4} and \eqref{post_cov} with  \eqref{E3}, the posterior predictive distribution of $y_{*}(t)$ at a new time $t$ is a univariate Gaussian distribution with mean and variance given by
	\begin{equation}
		\mH(t,\vz_{*})\hat{\vtheta}_{\mu} + \vb(t)^T\mTheta_{*}\mathbb{E}(\vxi^{(*)}|\vy_{*})
		\label{PreExpect}
	\end{equation}
	and
	\begin{equation}
		\vb(t)^T\mTheta_{*}\mathrm{Cov}(\vxi^{(*)}|\vy_{*})\mTheta_{*}^T\vb(t) + \hat{\sigma}_e^2,
		\label{Prevar}
	\end{equation}
	respectively. 
	If we are instead interested in the underlying latent function value $x_{*}(t)$, then the posterior predictive distribution will be the same except that  the expression for the variance will not have the $\hat{\sigma}_e^2$ term. Sometimes, such as in astronomy, measurement errors are provided with each observed value of $y_{*}(t)$. In this case, we modify our predictions by replacing all instances of $\hat{\sigma}_e^2$ above by
	the actual measurement error value  (including in the application of Algorithm \ref{codeEntireAlgorithm} to the training data). 
	
	\section{SIMULATION STUDY}\label{sec:simulation}
	
	We now compare our CD-FPCA  (Algorithm \ref{codeEntireAlgorithm}) with the methods proposed by  \citet{james2000principal},  \citet{jiang2010covariate}, and \citet{li2016supervised}. The \citet{james2000principal} method uses a spline basis  to approximate the classical FPCA model \eqref{EC0}, but does not incorporate covariates. Following \citet{jiang2010covariate}, we denote this approach by rFPCA, where the "r" stands for reduced rank. \citet{jiang2010covariate} proposed two local linear smoother based  methods, which do incorporate covariate information, but with high computational cost. Their  methods are  called fully adjusted FPCA (fFPCA) and mean adjusted FPCA (mFPCA); the former allows both the mean and covariance function to depend on covariates, whereas the latter only allows the mean function to do so.    The supervised sparse and functional principal component (SupSFPC) method proposed by \cite{li2016supervised} allows  the  scores $\vxi^{(n)}$ in \eqref{EC0} (but not the mean function) to vary with the covariates, and is computationally more efficient than all the other methods considered here (including ours). It is a state-of-the-art approach, and therefore a key comparison. Nonetheless, it has a number of limitations including the linear assumption and a requirement that the data have balanced sampling and regular spacing. 
	
	\subsection{Simulated Datasets}\label{subsec:simulationdesign}
	
	We simulate two datasets of noisy realizations of  $N=100$ and $N = 500$ latent functions, respectively. In both datasets, the $n$-th latent function $x_n(t,z)$ is a linear combination of a mean function $\mu(t,z)$ and $r=3$ orthonormal  eigenfunctions $f_j(t,z)$, $j = 1, \dots, r$, see \eqref{eqn:supervisedKLoeve}.  We set the covariate $z$ to be univariate, the mean function to be
	$\mu(t,z) = 30(t-z)^2$,
	and the three eigenfunctions to be
	$
	f_1(t,z) = \sqrt{2}\cos(\pi(t+z)),\label{eigen1}$ $
	f_2(t,z) = \sqrt{2}\sin(\pi(t+z))\label{eigen2}$ and $
	f_3(t,z) =
	\sqrt{2}\cos(3\pi(t-z)).\label{eigen3}$
	To further impose dependence of the covariance structure on the covariate $z$ we set the eigenvalues to be
	$\vd_z = \big(2(z+20),z+10,z\big)$.
	The scores $\vxi^{(n)}$ are sampled from a Gaussian distribution with mean $\boldsymbol{0}$ and covariance matrix $\mD_{z} = \mathrm{diag}(\vd_z)$.

	For easy comparison, all data generated in this section lie on a regular grid, i.e., the time points $t_i = \frac{i-1}{m-1}$ for $i=1,\dots,m = 100$. This accommodates the SupSFPC method which cannot handle irregularly spaced functional data. The final simulated dataset contains $m$ noisy observational points of each latent function $x_n$, i.e.,
	\begin{align}
		y_n(t_i,z) &= x_n(t_i,z)+\epsilon_n(t_i)=\mu(t_i,z)+\sum_{j=1}^r \xi^{(n)}_{j}f_j(t_i,z) + \epsilon_n(t_i)\label{equationx1} 
		%    &=\mu(t_i,z) + \vf ^{T}(t_i,z) \vxi^{(n)} +\epsilon_n(t_i), 
	\end{align}
	for $i=1,\dots,m$, where  $\epsilon_n(t_i)$ is an independent white noise  with variance $\sigma_e^2=0.01$. We repeat the simulation $50$ times.

	\subsection{Results}\label{Results_two}

	For our CD-FPCA method, we set the number of B-spline basis functions for capturing the dependence of the mean function on $t$  and $z$  to be  $l =10$ and $p = 10$, see (\ref{meanProduct}). For the covariance function, we set the number of basis functions for capturing dependence on  $t$  and $z$ to be $w=10$ and $q=10$, see Section~\ref{sec:covariance}. Similarly, for rFPCA we  use  $10$ basis functions to capture the dependence of the mean and covariance function on $t$  (the rFPCA model does not incorporate covariates). 
	For the mFPCA and fFPCA methods \citep{jiang2010covariate} we apply $10$-fold cross-validation to the $N=100$ dataset to select the smoothing bandwidths (cross-validation is too time-consuming to apply to the $N = 500$ dataset so we use the same values  for that dataset as well). In particular, to estimate the mean function we use the bandwidths  $0.57$ and  $0.52$ to smooth across time $t$ and the covariate $z$, respectively. To estimate the eigenfunctions we use the bandwidths $0.57$ and $0.66$ to smooth across time $t$ and the covariate $z$ (fFPCA only), respectively.

	\begin{table}[t]
		\footnotesize
		\caption{MSE   of the mean and eigenfunction estimators under CD-FPCA, SupSFPC, fFPCA, mFPCA, and rFPCA.}
		\centering
		\begin{tabular}{c||c|c|c|c|c}
			\toprule
			\cmidrule(r){1-6}
			\multicolumn{2}{c}{}&\multicolumn{4}{c}{MSE (SE)} \\
			\cmidrule(r){3-6}
			\multicolumn{2}{c}{}&Mean Fun.&First Eigen. &Second Eigen.&\multicolumn{1}{c}{Third Eigen.}\\
			
			\hline
			\multirow{5}{*}{$N=100$}   
			&CD-FPCA &$3.31$ ($0.25$)&$0.309$ ($0.036$)& $0.326$ ($0.035$) &$0.036$ ($0.004$) \\
			&SupSFPC  &$6.27$ ($0.15$)&$0.726$ ($0.008$) &$0.738$ ($0.008$) &$0.849$ ($0.013$)\\
			&fFPCA  &$5.05$ ($0.30$) &$0.396$ ($0.024$) &$0.427$ ($0.023$) &$1.855$ ($0.004$) \\
			&mFPCA  &$5.43$($0.30$) &$0.732$($0.009$) &$0.734$($0.009$) &$1.867$($0.003)$ \\
			&rFPCA &$31.24$ ($0.35$) &$0.715$ ($0.008$) &$0.738$ ($ 0.009$) &$0.855$ ($0.012$) \\		
			\hline
			\multirow{5}{*}{$N=500$}   &CD-FPCA  &$1.81$ ($0.15$) & $0.038$ ($0.004$)& $0.040$ ($0.004$)& $0.004$ ($0.000$) \\
			&SupSFPC  &$ 5.19$ ($0.05$) &$0.722$ ($0.004$) &$0.735$ ($0.004$) &$0.861$ ($0.006$)\\
			&fFPCA  &$3.84$ ($0.09$) &$0.332$ ($0.015$) &$0.356$ ($0.015$) &$1.869$ ($0.002$)\\
			&mFPCA  & $4.29$ ($ 0.09$)  & $ 0.731$ ($0.004$)& $0.735$ ($0.004$)& $1.879$ ($0.001$)\\
			&rFPCA & $30.86$ ($0.19$) & $0.710$ ($0.004$)& $0.734$ ($0.004$)& $0.868$ ($0.006$)\\
			%\bottomrule
			\hline    
		\end{tabular}
		\label{tab:real_comparison}
	\end{table}

	The five methods are compared based on  the mean squared errors (MSE)  for their estimated mean function and eigenfunctions. For a function $g$ and its estimate $\hat{g}$, we define the squared error to be $\frac{1}{Nm}\sum_{n=1}^N\sum_{i=1}^{m} (g(t_i,z_n)-\hat{g}(t_i,z_n))^2$, where $m$ is the number of grid points at which each function is observed. We approximate the MSE by the mean of the squared error across the $50$ replicate simulations. For SupSFPC, the MSE for the mean function is computed over the conditional mean given by~\eqref{eqn:SupSFPC:mean}. When computing the loss for the estimated eigenfunctions, the $\pm 1$ sign needs to be matched to the true eigenfunction because eigenfunctions are only identifiable up to a $\pm 1$ sign. 
	
	Table~\ref{tab:real_comparison}  shows the  mean squared errors (MSE) and its standard error (SE) across the $50$ simulations. Our CD-FPCA method has the lowest MSE than all the other approaches for all three eigenfunctions and the mean function. The second best method in terms of MSE is  fFPCA. The remaining methods (mFPCA, SupSFPC and rFPCA) are limited by their underlying modeling assumptions. They do not have the flexibility  to recover the true mean function and the  true eigenfunctions in our simulation setting, which explains why their estimation accuracy does not decrease as the sample size increases. Although fFPCA has the same capacity to recover the conditional structure as our model does, its estimation accuracy is worse than that of our model, especially for the third eigenfunction.  The fFPCA approach also suffers from prohibitively high computational cost, as we now illustrate.

	\begin{figure}[t]
		\centering
		\includegraphics[width=0.6\textwidth]{./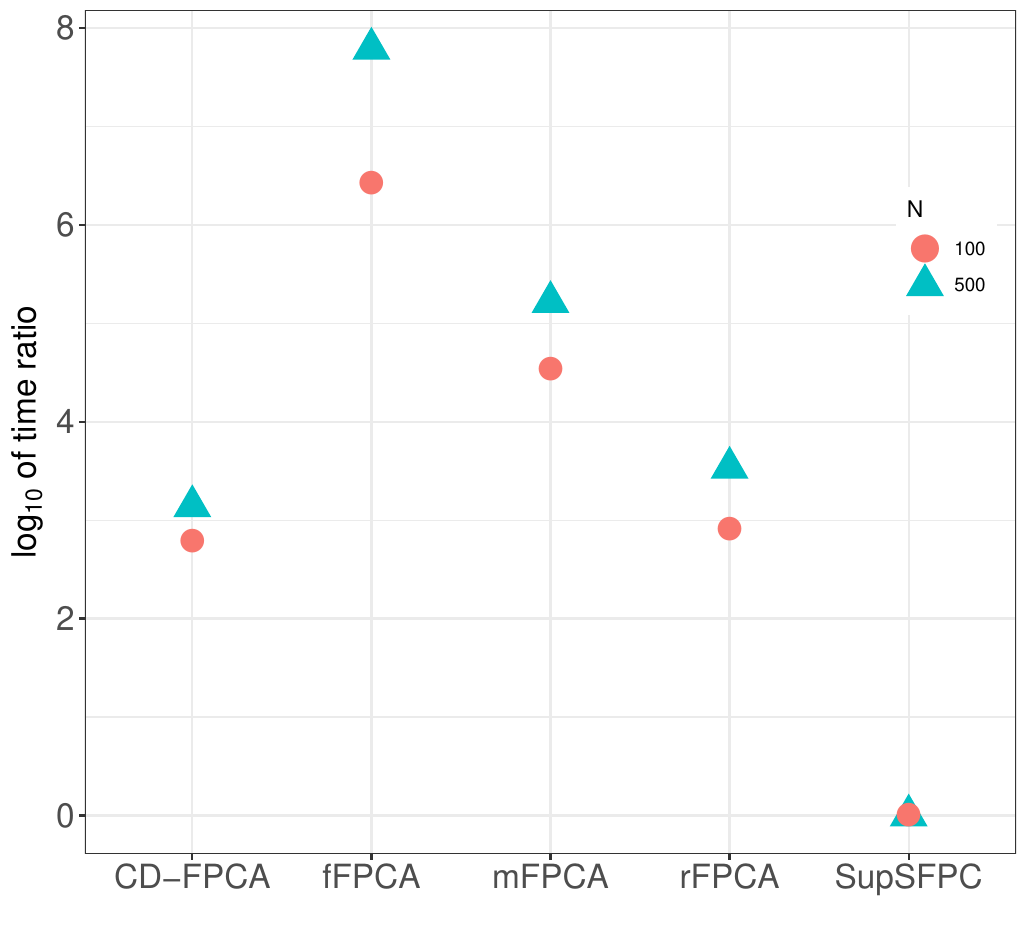}
		\caption{Log relative run time $\log_{10}(T_E/T_{\text{SupSFPC}})$, where $T_E$ denotes the mean run time in seconds for $E\in\{\text{CD-FPCA}, \text{fFPCA}, \text{mFPCA}, \text{rFPCA},\text{SupSFPC}\}$. The round points and triangle points represent different datasets, i.e., $N = 100$ and $N = 500$, separately}
		\label{fig:time_comparison}
	\end{figure}
	Figure~\ref{fig:time_comparison} compares the computational cost of the five methods and  shows the logarithm of computational time ratio $\log_{10}(T_E/T_{\text{SupSFPC}})$, where $T_E$ denotes the mean run time in seconds for $E\in\{\text{CD-FPCA},\text{SupSFPC}, \text{fFPCA}, \text{mFPCA}, \text{rFPCA}\}$. SupSFPC is used as the baseline because it is the fastest method. For reference, SupSFPC took an average of $0.024$ and $0.080$ seconds for the $N=100$ and $N=500$ cases, respectively. Our method is computationally more efficient than all the other methods except SupSFPC. Despite its speed, the SupSFPC approach has limitations, because it performs substantially worse than CD-FPCA (and fFPCA) in terms of MSE. For example, Table~\ref{tab:real_comparison} shows that under SupSFPC  the MSE for  the first eigenfunction is two times higher than under CD-FPCA.  The  local smoother based approaches (fFPCA and mFPCA) are  computationally inefficient. For example, it took about 160 hours for fFPCA to finish one replicate with $N=500$ samples. The high accuracy and comparatively low computational cost of our method means that it is more suitable for application to large datasets than the other approaches. 
	
	%Regarding the "$\star$" symbol in Figure \ref{fig:time_comparison}, note that the fFPCA algorithm was not run for the case $N=500$ because it would have taken more than $160$ hours to complete. 
	%The time given in  Figure \ref{fig:time_comparison} for this case (marked by "$\star$") was approximated by fitting a linear model to the run times for smaller values of $N$ and extrapolating to $N=500$ (an optimistic estimate). 
	
	\subsection{Further Comparision with SupSFPC}

	Section~S.1 of the Supplementary Materials considers further comparisons between CD-FPCA and  SupSFPC,  because  SupSFPC is the only other method considered here which can be applied to large datasets and also incorporates covariate information. We compared the predictive performance of these two models over the simulated dataset in Section~\ref{subsec:simulationdesign}. We also simulated an additional dataset using the SupSFPC model \eqref{eqn:SupSFPC:model}, in which the scores vary linearly with the covariates. Lastly, we simulated a third dataset using the SupSFPC model~\eqref{eqn:SupSFPC:model}, except that the scores vary \textit{quadratically} with the covariates.
	Our conclusion from these simulations is that CD-FPCA performs comparably to SupSFPC when data are generated under the latter model~\eqref{eqn:SupSFPC:model}, and performs substantially better when the linear assumption~\eqref{eqn:SupSFPC:score} is violated. Our findings illustrate the greater flexibility of CD-FPCA compared with SupSFPC, and suggest that the former method should be preferred, except perhaps in the case of very large datasets for which we are also confident that the SupSFPC assumptions are satisfied.

	\section{MODELING ASTRONOMICAL  LIGHTCURVES}\label{sec:application}
	
	\begin{figure}[t]
		\centering
		\includegraphics[width=0.9\textwidth]{./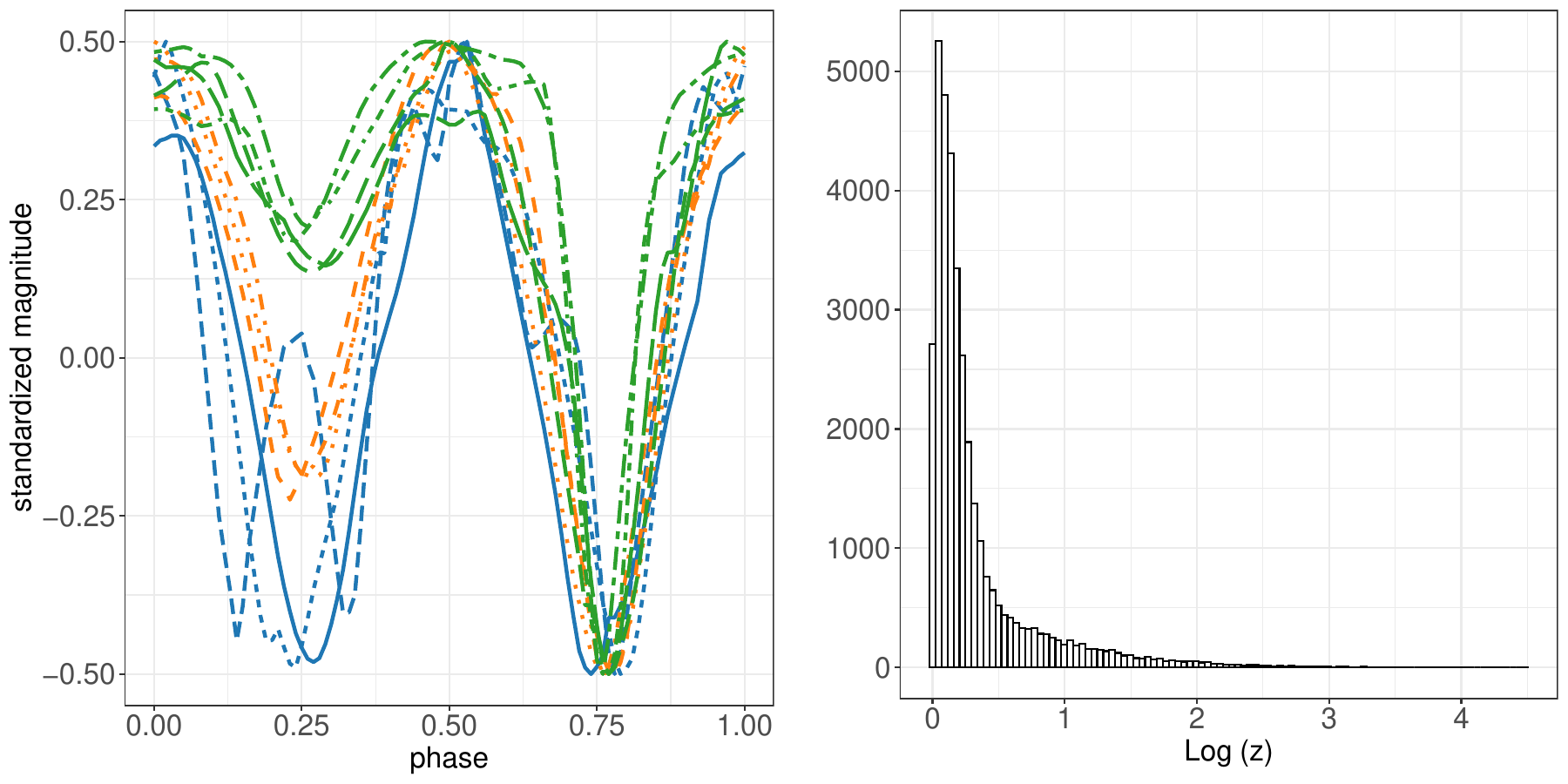} 
		\caption{Left panel: standardized lightcurves from eclipsing binary sources. Lightcurves with covariate values in $[1.00,1.06]$, $[1.34,1.57]$ and $[2.39,88.48]$ are shown in blue, orange, and green, respectively. Right panel: histogram of the logarithm of the covariate $z$.  }\label{fig:realdata:examples}
	\end{figure}
	
	In astronomy, a lightcurve is a time series of the observed brightness of a source, e.g., a star or galaxy. Lightcurves are useful because some astronomical sources vary in brightness over time, and these variations can be used to classify the type of  source or infer its properties, e.g., the period of star pulsations (from which additional physical insights can be gained). One type of variable source is an eclipsing binary system, which is a system of two stars orbiting each other. Many stars visible to the eye are in fact eclipsing binary systems. If the orbits of the two stars lie in the plane that also contains our line of sight then the stars will alternately eclipse each other from our perspective. Binary stars cannot usually be resolved, but the eclipses block some of the light from reaching us and create periodic dips in the observed lightcurve. These characteristics can be used to distinguish eclipsing binary sources from other variable sources and help us to infer properties of the two stars, e.g., their relative masses.

	Our data set consists of $N = 35615$  eclipsing binary lightcurves from the Catalina Real-Time Transient Survey (CRTS) \citep{drake2009first} which were classified by the CRTS team in \citet{drake2014catalina}. Each observed magnitude (brightness) measurement is accompanied by a known measurement error (i.e., standard deviation), that is determined by astronomers based on the properties of the telescope used. The data are publicly available from \url{http://crts.caltech.edu/}. The left panel of Figure~\ref{fig:realdata:examples} shows $10$ standardized eclipsing binary lightcurves from the dataset. The $y$-axis units are standardized magnitude: magnitude is an astronomical measure of the intensity of light from a star, with smaller numbers indicating greater intensity. Standardizing so that all the measurements fall in $[-0.5,0.5]$ is necessary here because we are principally interested in modeling the similar shapes of the lightcurves.  For visual purposes the measurement errors are not plotted.
	%they have a median value of about 0.15 (in standardized magnitude units). 
	The $x$-axes  in the left panel of Figure~\ref{fig:realdata:examples}  are phase of oscillation (as opposed to time),  because eclipsing binary lightcurves are periodic. The period of  oscillation for each lightcurve was found by \citet{drake2014catalina} and is treated as known for the purposes of our analysis. In practice, the periods would have to be estimated, which is itself a challenging inference problem. It is worth noting that the improved modeling we present here could in turn facilitate improved period estimation accuracy in future. 
	
	An important feature in Figure~\ref{fig:realdata:examples} is that for some lightcurves the depth of the two eclipses are similar, and for others  they are very different. This distinction is due to there being different types of eclipsing binary system. Eclipsing binaries are often divided into two classes, contact binaries which are sufficiently close to exchange mass, and detached binaries which are more separated. Contact binaries typically consist of two sources with similar properties (e.g., size and brightness), meaning that the two eclipses are similar. In contrast, detached binaries may have eclipses of any relative size, because the two sources can have completely different properties.
	
	The above considerations raise an important modeling challenge: eclipsing binary lightcurves can be modeled using somewhat similar functions, because they have similar shapes and covariance structures, but it does not make sense to treat them as coming from a completely homogeneous distribution,  as is typically assumed in FPCA. The current solution is to divide eclipsing binaries into contact and detached binaries and treat these groups as homogeneous, but this is still unsatisfactory because the detached binaries group is heterogeneous. Treating eclipsing binaries as homogeneous, means that any models we use to fit them will either be inaccurate or unnecessarily complicated, which in  turn will reduce our ability to classify them, estimate their periods, and learn their other properties. Instead, we use our CD-FPCA method to learn a mean function and a set of covariance matrix eigenfunctions that smoothly vary with the relative depth of the two eclipses. This approach captures the fact that eclipsing binary lightcurves are similar, while also accounting for a physically interpretable difference.

	\begin{figure}[t!]
		\centering
		\includegraphics[width=0.8\textwidth]{./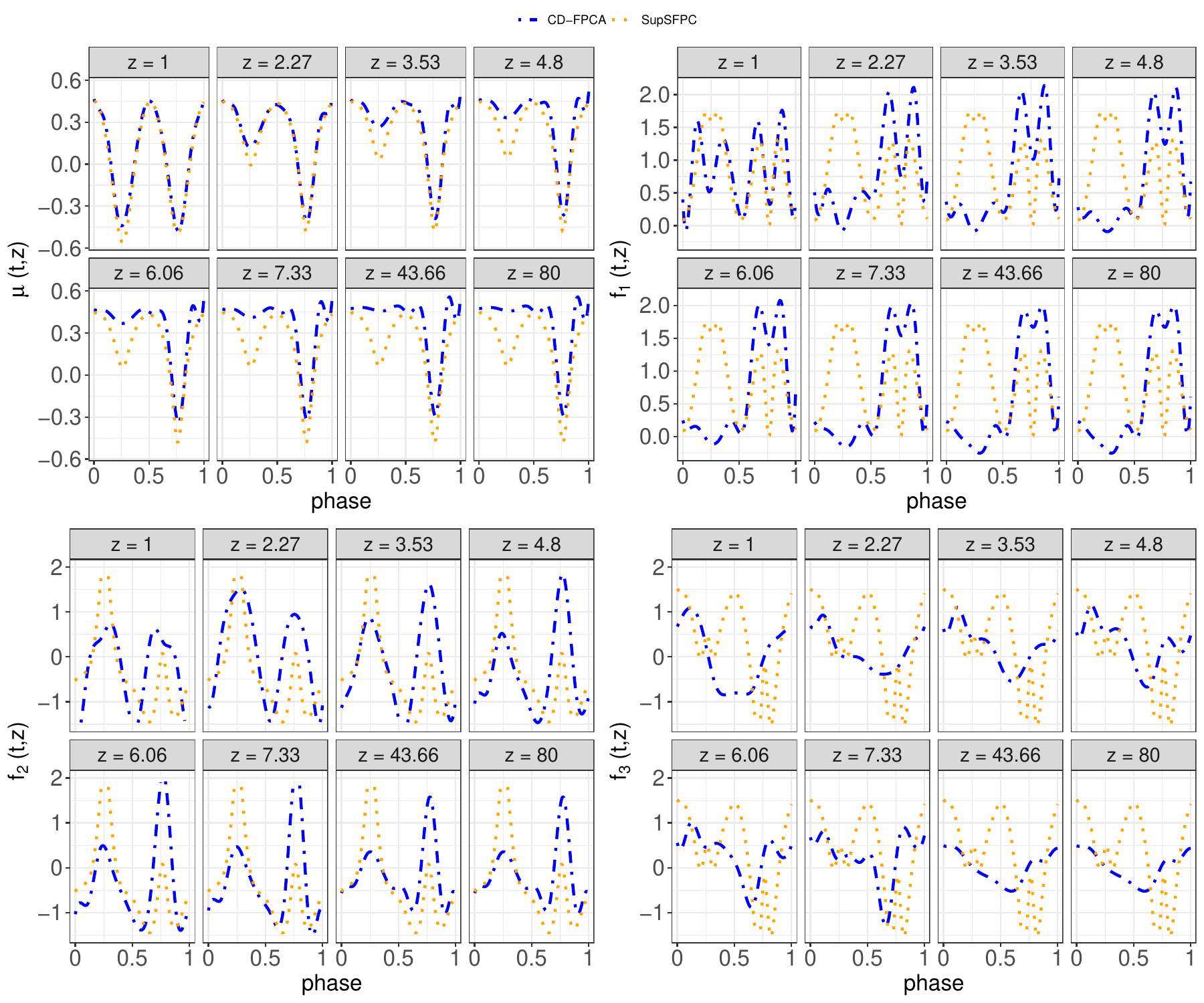} 	
		\caption{(Top left) estimates of the mean function by CD-FPCA (blue dot-dash lines) and SupSFPC (orange dotted lines), for a range of covariate values. (Top right, bottom left, bottom right) estimates of eigenfunctions $f_1$, $f_2$, and $f_3$, respectively, under the CD-FPCA  and SupSFPC methods.}\label{figcomgride_eig}
	\end{figure}

	To implement our approach we need a parameter or covariate $z$ related to the relative depth of the two eclipses of each lightcurve. In practice,  such information may sometimes be available from a separate observation of the eclipsing binary system, e.g., from another  telescope targeting a different light wavelength range. However, in many cases a parameter capturing the relative depth would need to be inferred from the data. For the sake of simplicity, in this work we calculate an approximation of the relative depth of the eclipses of each lightcurve from the data and treat it as a covariate $z$. In particular, we use a simple cubic B-spline approximation to each lightcurve and compute the ratio of the change in standardized magnitude for the larger eclipse and that for the smaller eclipse. For our dataset this gives $z$ values in the range $[1.00,88.48]$, and the histogram of $\log(z)$ is shown in the right panel of Figure~\ref{fig:realdata:examples}. The key point is that this covariate is low dimensional (univariate) but still explains a great deal of the variation between the lightcurves (which have infinite dimension).

	Since SupSFPC only handles regularly spaced data and does not make use of measurement errors, we initially consider  a processed version of the data in which all the observations lie on a regular grid in phase  space. In particular, we use a cubic spline fit to each lightcurve to obtain observations at phases $t_i = \frac{i-1}{m-1}$ for $i=1,\dots,m = 101$, regardless of the number of observations in the original lightcurve. For this gridded data, we do not have measurement errors. 
	The estimated mean function and eigenfunctions under CD-FPCA and SupSFPC are shown in Figure~\ref{figcomgride_eig}.   Compared with the  SupSFPC estimates (orange dotted curves), the estimates by CD-FPCA (blue dot-dashed curves) show improved low-rank representation and detect meaningful features. In particular, in the upper left panel of Figure~\ref{figcomgride_eig}, the estimated mean curves under CD-FPCA have  meaningful interpretations. When $z$ is large, the lightcurves likely correspond to Algol eclipsing systems. As stated in \cite{kallrath2009eclipsing}, the depth of the two minima are very different for Algol lightcurves.  The mean curves estimated under our CD-FPCA approach successfully detect this feature, whereas those under SupSFPC  do not. 
	
	The  eigenfunctions estimated by CD-FPCA are also more interpretable. For different types of the eclipsing system, e.g., contact binaries and detached binaries,  lightcurve variability is  different. Contact binaries typically consist of two similar sources (e.g., in terms of size and brightness), and therefore the patterns seen in the first half phase $[0,0.5)$ and the second half phase $[0.5,1)$ of the lightcurve are usually similar. In contrast, detached binary systems tend to be composed of two sources with differing properties, which means that the variability seen in the first half phase can be quite different to that seen in the second half phase.  The estimated first eigenfunction under CD-FPCA (blue dot-dash lines in the top right panel of Figure~\ref{figcomgride_eig}) captures this feature. When the covariate is small,  the first and second half phase components of the eigenfunctions are similar, which aligns with the interpretation that these lightcurves come from contact binaries. On the other hand, as the covariate increases, the second half phase shows greater relative variability, which is consistent with detached binaries as we now explain further. For Algol eclipsing binary systems (for which we expect $z$ to be large), the smaller star is hotter and bluer and this leads to  stable brightness (relatively smaller fluctuations) in the first half phase. In particular, the smaller dip in the lightcurve during the first half phase may be undetectable partly due to the ``reflection effect'' (a reprocessing of the hotter stars  radiation as it falls on the atmosphere of its companion, increasing the cooler star's luminosity, see \cite{kallrath2009eclipsing}). The estimated first eigenfunction of CD-FPCA  well captures  this ``reflection effect'' becasue they indicate  that for large $z$ the fluctuations in the first half phase are much smaller compared to those in the second half phase. The second eigenfunction has a similar interpretation, and the third captures additional fluctuations.

	%{\color{red}[Isn't it  the larger dip that corresponds to the case where the dimmer start passes in front of the brighter star?]} {\color{orange} Yes, the main reason is that the dimmer star to pass in front of the brighter star. But here we are focusing on explaining the variations i.e., the first eigenfunctions. The reflections make the system more stable when the brighter star passes in front of the dimmer star during the first half phase} 
	
	In Section~S.2 of the Supplementary Materials, we further compare the quality of the estimated  eigenfunctions for lightcurve data by comparing the predictive performance of  CD-FPPCA and SupSFPC.  We find that  CD-FPCA has superior predictive performance, which again indicates that our model is more appropriate for this application. We anticipate that our approach will similarly offer better performance when inferring periods and other physical properties of binary systems, especially when observations are sparse and noisy. In Section~S.2, we apply our method to the raw (non-gridded) data (for which SupSFPC is not applicable), and it demonstrates similar performance.

	\section{DISCUSSION}\label{sec:discussion}
	
	Our CD-FPCA method offers an attractive option for analyzing large functional datasets in which both the mean and covariance function vary smoothly with covariates. It can flexibly incorporate this type of dependence and has substantially lower computational cost than  popular local smoother based covariate adjustment approaches, e.g., \cite{jiang2010covariate}, \cite{jiang2011functional}, \cite{zhang2013time}, and \cite{zhang2016sparse}. While the SupSPFC approach proposed by \citet{li2016supervised} has even lower computational cost than CD-FPCA, it does not achieve the same levels of flexibility and accuracy. Indeed, in both our  simulation study and our real data analysis, CD-FPCA performed better than SupSFPC in all the accuracy comparisons that we considered, e.g., mean and eigenfunction estimation, and prediction accuracy (except when the restrictive assumptions of SupSFPC were exactly satisfied, in which case the two methods were comparable, see the first and third rows of Table~S.2). Furthermore, CD-FPCA can handle irregular and unbalanced observation times  and incorporate measurement errors, whereas SupSFPC can not. 
	
	In this work, we have focused on illustrations with univariate covariates,  but our model formulation in Section~\ref{sec:model} is general. The proposed method can be  extended to multivariate covariate case via using a basis generated by a reproducing kernel \citep{rasmussen2006gaussian}.	A reproducing kernel  $k(\vz, \vz')$ has the reproducing property $\langle k(\cdot, \vz) , k(\cdot, \vz')  \rangle = k(\vz, \vz').$ For example, $k(\vz, \vz') = \exp(-\delta\|\vz - \vz'\|_2^2)$ is a choice of reproducing kernel.  A reproducing kernel induces a reproducing kernel Hilbert space (RKHS), and the RKHS-norm of a function measures the roughness of the function.	By selecting a set of inducing points	$\tilde{\vz}_1,\cdots, \tilde{\vz}_q$ in the range of $\vz$, the elements of $\mC(\vz;\vbeta)$ can be approximated by a basis expansion $C_{ij}(\vz) = \sum_{k=1}^{q} \beta_{ijk}k(\vz, \tilde{\vz}_j)$. The RKHS-norm of $C_{ij}(\vz)$ can be used for regularization. Details of this extension is left for future research. The proposed method can also be easily extended to  categorical covariates. Suppose the covariate $\vz$ takes $K$ values $1, \dots, K$. Then, we use the basis representation  $\mC(\vz;\vbeta) = \sum_{k=1}^{K} \mC_k I(\vz = k)$, where $I(\cdot)$ is the indicator function, and the matrix $\mC_k = \big(\beta_{kij}\big)\in\mathbb{R}^{w\times r}$ has  scalars $\beta_{kij}$ as its entries. In this case, there is no need to introduce roughness penalty over $\vz$, and we need only to set $\lambda_{\vz}^{(\mu)}= \lambda_{\vz}=0$ in our algorithm.
	
	Regarding further methodological development, one could construct a map from Euclidean space to the Stiefel manifold, as opposed to the symmetric positive semi-definite rank $r$ matrices   (see \eqref{EC1}), because the underlying FPC coefficients matrix, denoted $\mTheta_{\vz}$,  lies on the Stiefel manifold (where, more specifically, $\mTheta_{\vz}$ is such that $\vf(t,\vz)=\mTheta_{\vz}^T \vb(t)$).  Indeed, the Stiefel manifold is  a more frequently used structure than the symmetric positive semi-definite rank $r$ matrices, and optimization methods on the Stiefel manifold  have been well developed, e.g., \cite{boothby1986introduction}, \cite{balogh2004some}, \cite{nishimori2005learning}, \cite{wen2013feasible}. There may be advantages to a Stiefel manifold based approach compared with our method proposed here, or vice versa, but this needs to be further investigated.

	\section*{Acknowledgment}
	The authors thank the editor, the associate editor, and three anonymous reviewers for constructive comments that helped significantly improve this work.  Ding's research was accomplished during his visit to Department of Statistics, Texas A\&M University. Ding would like to thank Professor Joe Newton for his financial support during part of his visit.
	He's research was partially supported by National Natural Science Foundation of China (No.11801561).
	
	\bibliographystyle{apalike}
	
	\bibliography{Bibliography-MM-MC}
	
	\appendix

\end{document}